%% file: main.tex
\documentclass[hidelinks,onefignum,onetabnum]{siamonline250211_TOC}

\input{ex_shared}
\usepackage{booktabs}
\usepackage{nicefrac}

\DeclareMathOperator*{\minimize}{minimize}

\begin{document}

\maketitle  

\begin{abstract}
Wavelet phase is a critical parameter in seismic processing, where zero-phase wavelets are essential for maximizing temporal resolution and ensuring accurate interpretation of subsurface structures. In practice, however, the seismic wavelet is often nonstationary, exhibiting a phase that varies in space and time due to physical factors such as attenuation, dispersion, and thin-bed tuning effects. To estimate this time-variant phase, higher-order statistical measures—specifically kurtosis and skewness—are traditionally maximized to drive the signal toward a maximally non-Gaussian or maximally asymmetric zero-phase state. This paper addresses the computational and stability challenges inherent in nonstationary estimation by casting the problem as a regularized non-convex optimization task. We propose a robust framework based on the Alternating Direction Method of Multipliers (ADMM) that eliminates the instability and artifacts associated with traditional piecewise-stationary windowed approaches. The core of our contribution is the derivation of the first closed-form proximity operators for the scale-invariant inverse kurtosis ($\ell_2^4/\ell_4^4$) and inverse skewness ($\ell_2^3/\ell_3^3$) functionals. By exploiting the signed permutation invariance of these statistical measures, we reduce the high-dimensional proximal subproblems to efficient one-dimensional root-finding tasks. We provide a detailed geometric interpretation of the optimality conditions, demonstrating that the global minimizer is governed by a branch-separation property. Furthermore, we derive an explicit critical threshold parameter, $\mu_c$, which provides a theoretical rule for identifying the global minimum among multiple stationary points. Numerical validations on synthetic and real seismic data demonstrate that the proposed proximal algorithms achieve $O(n)$ computational complexity and superior stability compared to traditional methods, effectively enabling high-resolution, nonstationary phase correction in complex geological environments.
\end{abstract}

\begin{keywords}
Kurtosis Maximization, Skewness Maximization, Norm ratio, Proximity operator, Nonstationary Phase estimation
\end{keywords}

\begin{MSCcodes}
65K10, 65F22, 90C26, 86A15, 94A12
\end{MSCcodes}
\section{Introduction}

The seismic trace $s(t)$ is traditionally modeled as the convolution of a seismic wavelet $w(t)$ and the Earth's reflectivity series $r(t)$ \cite{Wiggins_1978_MED}. In seismic data processing, the estimation and correction of the wavelet phase is a critical task; while physical causal systems are inherently minimum-phase, seismic interpretation relies on zero-phase wavelets to maximize temporal resolution and ensure that reflection times correspond accurately to subsurface interfaces
\cite{Robinson_2000_GSA,Schoenberger_1974_RCM}. However, the wavelet phase is frequently nonstationary, undergoing spatial and temporal variations due to physical phenomena such as attenuation, dispersion, and thin-bed tuning effects 
\cite{vanderBan_2009_NPE,vanderBan_2008_TVW}.

Statistical methods for phase estimation generally rely on higher-order statistics (HOS) under the assumption that the underlying reflectivity is non-Gaussian and white \cite{Wiggins_1978_MED}. The central limit theorem provides the mathematical rationale: the convolution of a filter with a non-Gaussian white series renders the result more Gaussian; thus, the optimal zero-phase state is identified by maximizing the non-Gaussianity of the signal \cite{Donoho_1981_OME, vanderBan_2009_NPE}. Historically, the varimax norm, or kurtosis (a fourth-order statistic), has been the primary objective function for this purpose \cite{Levy_1987_APC, Wiggins_1978_MED}. More recently, skewness (a third-order statistic) has been proposed as a superior attribute due to its higher dynamical range and better stability in detecting the asymmetry inherent in Earth's reflectivity distributions \cite{vanderBan_2008_TVW,vanderBan_2009_NPE,Fomel_2014_LSA}.

To address the time-varying nature of the wavelet, traditional approaches subdivide the seismic section into overlapping time windows, assuming piecewise stationarity within each window. However, these methods are highly sensitive to the chosen window length and often produce unstable, oscillatory phase estimates if the windows are too short \cite{vanderBan_2008_TVW,vanderBan_2009_NPE}. Recent advancements have moved toward recast-ing the problem as a regularized least-squares optimization to produce smoothly varying local attributes \cite{vanderBan_2008_TVW,vanderBan_2009_NPE,Fomel_2014_LSA}.

Despite their effectiveness, the objective functions involved—specifically the scale-invariant inverse kurtosis ($\ell_2^4/\ell_4^4$) and inverse skewness ($\ell_2^3/\ell_3^3$)—are non-convex and highly nonlinear, posing significant computational challenges in high-dimensional settings. In this paper, we propose a robust numerical framework based on the Alternating Direction Method of Multipliers (ADMM) to solve the nonstationary phase estimation problem \cite{Powell_1969_NLC,Nocedal_2006_NO,Boyd_2011_DOS} . The core contribution of this work is the derivation of the first closed-form proximity operators for these scale-invariant statistical measures. 
We reduce the high-dimensional proximal subproblems to efficient one-dimensional root-finding tasks. This approach achieves $O(n)$ computational complexity, providing a stable and efficient alternative to traditional windowed or grid-search methods for high-resolution seismic processing.

The paper is organized as follows. Section 2 formulates the phase correction problem within the ADMM framework. Sections 3 and 4 present the analytical derivation of the proximity operators for inverse kurtosis and inverse skewness, respectively. Section 5 provides numerical validation of the proposed algorithms on synthetic and real seismic data.

\section{Mathematical Formulation and Optimization}

In this section, we formulate the nonstationary phase estimation problem as a regularized non-convex optimization task and propose an efficient numerical scheme based on the Alternating Direction Method of Multipliers (ADMM) \cite{Nocedal_2006_NO}.

\subsection{Problem Statement and Objective Functions}
The time-varying phase $\phi$ of a discrete seismic signal $s$ is estimated by maximizing a statistical measure of non-Gaussianity $\kappa(\cdot)$ applied to the phase-rotated signal $s_{\text{rot}}(\phi)$ \cite{vanderBan_2008_TVW}
\begin{equation}
\phi = \arg\max_{\phi} \kappa(s_{\text{rot}}(\phi)).
\end{equation}
The rotated trace is computed using the original trace $s$ and its Hilbert transform $H[s]$ such that $s_{\text{rot}}(\phi) = s \cos \phi +H[s] \sin \phi$ \cite{Levy_1987_APC}. We specifically consider the scale-invariant kurtosis and skewness functionals:
\begin{align}
\text{Kurtosis: } \kappa(s_{\text{rot}}) &= \frac{\sum_i s_{\text{rot},i}^4(\phi)}{\left( \sum_i s_{\text{rot},i}^2(\phi) \right)^2} = \frac{\|s_{\text{rot}}\|_4^4}{\|s_{\text{rot}}\|_2^4}, \\
\text{Skewness: } \kappa(s_{\text{rot}}) &= \frac{\sum_i s_{\text{rot},i}^3(\phi)}{\left( \sum_i s_{\text{rot},i}^2(\phi) \right)^{3/2}} = \frac{\|s_{\text{rot}}\|_3^3}{\|s_{\text{rot}}\|_2^3}.
\end{align}
These optimization problems are highly nonlinear and cannot be solved directly using standard linear techniques \cite{Aster_2004_PEI}. Since $\kappa(s_{\text{rot}}) > 0$, maximizing the statistical measure is equivalent to minimizing its reciprocal. 

\subsection{Regularized Framework and Variable Splitting}
To overcome the instability of traditional windowed approaches, we introduce a variable-splitting strategy  \cite{Beck_2009_FIS}. By defining an auxiliary variable $x = s_{\text{rot}}(\phi)$, we recast the problem as the following constrained minimization:
\begin{equation}
\minimize_{\phi, x}~ \kappa^{-1}(x) + \alpha R(\phi) \quad \text{subject to } x = s_{\text{rot}}(\phi),
\end{equation}
where $R(\phi)$ is a regularization term (e.g., a discrete Sobolev or Tikhonov norm) promoting temporal and spatial smoothness of the estimated phase function \cite{vanderBan_2009_NPE,Fomel_2014_LSA}. The parameter $\alpha > 0$ balances this smoothness against the desired non-Gaussianity of the rotated signal.

\subsection{ADMM Iterative Scheme}
The associated Augmented Lagrangian for the constrained problem is given by  \cite{Powell_1969_NLC,Nocedal_2006_NO}:
\begin{equation}
\mathcal{L}(x, \phi, \lambda; \mu) = \kappa^{-1}(x) + \alpha R(\phi) + \lambda^T (x - s_{\text{rot}}(\phi)) + \frac{\mu}{2} \|x - s_{\text{rot}}(\phi)\|_2^2,
\end{equation}
where $\lambda$ is the Lagrange multiplier and $\mu > 0$ is the penalty parameter. Applying the ADMM framework yields the following iterative scheme \cite{Boyd_2011_DOS}:
\begin{subequations}
\label{ADMM_phase}
\begin{align}
x^{k+1} &= \arg\min_{x} \kappa^{-1}(x) + \frac{\mu}{2} \|x - s_{\text{rot}}(\phi^k) + \lambda^k\|_2^2, \label{subeq:x_update} \\
\phi^{k+1} &= \arg\min_{\phi} \frac{\mu}{2} \|x^{k+1} - s_{\text{rot}}(\phi) + \lambda^k\|_2^2 + \alpha R(\phi), \label{subeq:phi_update} \\
\lambda^{k+1} &= \lambda^k + x^{k+1} - s_{\text{rot}}(\phi^{k+1}). \label{subeq:lambda_update}
\end{align}
\end{subequations}

The efficiency of the proposed scheme depends on the resolution of the subproblems:
\begin{enumerate}
    \item \textbf{The $x$-update (\ref{subeq:x_update}):} This step represents the application of a proximity operator for the inverse kurtosis or inverse skewness. In Sections  \ref{prox_kustosis} and \ref{prox_skewness}, we derive the first closed-form solutions for these operators, reducing high-dimensional minimizations to efficient $O(n)$ root-finding tasks.
    \item \textbf{The $\phi$-update (\ref{subeq:phi_update}):} This subproblem is nonlinear but can be solved efficiently via a single Gauss--Newton iteration \cite{Tapia_1977_DMM}. Defining the residual $r(\phi) = x^{k+1} - s_{\text{rot}}(\phi) + \lambda^k$ and letting $J(\phi) = \frac{\partial s_{\text{rot}}(\phi)}{\partial \phi}$ denote the Jacobian, the update $\delta\phi$ is obtained by solving \cite{Nocedal_2006_NO}
    \begin{equation}
    (\mu J_k^T J_k + \alpha \nabla^2 R(\phi^k)) \delta\phi = \mu J_k^T r_k - \alpha \nabla R(\phi^k).
    \end{equation}
    Crucially, because phase rotation is a pointwise operation, the Jacobian $J(\phi)$ is diagonal. This results in a diagonal $J_k^T J_k$ term, allowing the data misfit contribution to decouple across samples. Coupling between samples is introduced solely through the regularization term $R(\phi)$, making the update computationally stable and efficient.
\end{enumerate}

\section{Proximity Operator of the $\ell_2^4/\ell_4^4$ Ratio} \label{prox_kustosis}

In this section, we investigate the computation of the proximity operator associated with the nonconvex, scale-invariant inverse kurtosis functional $\kappa^{-1}: \mathbb{R}^n \setminus \{0\} \to \mathbb{R}$, defined as the fourth power of the ratio between the $\ell_2$ and $\ell_4$ norms:
\begin{equation} \label{kurt0}
\kappa^{-1}(x) = \frac{\|x\|_2^4}{\|x\|_4^4}.
\end{equation}
For a given input vector $y \in \mathbb{R}^n$ and a proximal parameter $\mu > 0$, the proximity operator $\text{prox}_{\mu \kappa^{-1}}(y)$ is defined as the solution to the following minimization problem:
\begin{equation}
\text{prox}_{\mu \kappa^{-1}}(y) = \arg\min_{x \in \mathbb{R}^n} \left\{ \Phi(x) := \frac{1}{2}\|x - y\|_2^2 + \mu \frac{\|x\|_2^4}{\|x\|_4^4} \right\}.
\end{equation}
The function $\kappa^{-1}(x)$ is scale-invariant, satisfying $h(c\, x) = h(x)$ for any $c \neq 0$. While this property is ideal for promoting spikeness, it renders the objective function $\Phi(x)$ nonconvex.

To simplify the computation, we exploit the signed permutation invariance of the $\ell_2$ and $\ell_4$ norms. As established in Theorem \ref{theorem1}, the minimizer $x^*$ must share the same sign pattern as the input vector $y$, satisfying $\text{sign}(x^*) = \text{sign}(y)$. Furthermore, if the components of $y$ are sorted in non-decreasing order ($0 \le y_{(1)} \le y_{(2)} \le \dots \le y_{(n)}$), there exists a global minimizer $x^*$ that preserves this monotonicity:
\begin{equation}
0 \le x^*_{(1)} \le x^*_{(2)} \le \dots \le x^*_{(n)}.
\end{equation}
Consequently, we restrict our analysis to the nonnegative orthant and assume $y$ is pre-sorted in ascending order.

\begin{theorem}[Existence of an ordered proximal minimizer]\label{theorem1}
Let $\mu>0$ and $y\in\mathbb{R}^n$. Consider the proximal problem
\[
\min_{x\in\mathbb{R}^n}
\Bigg\{
\Phi(x)
:=
\frac12\|x-y\|_2^2
+\mu\,\frac{\|x\|_2^4}{\|x\|_4^4}
\Bigg\}.
\]
Assume that $y$ is nonnegative and sorted in ascending order,
\[
0\le y_{(1)}\le y_{(2)}\le \cdots \le y_{(n)}.
\]
Then there exists at least one global minimizer $x^\star$ of $\Phi$ such that
\[
0\le x_{(1)}^\star\le x_{(2)}^\star\le \cdots \le x_{(n)}^\star.
\]
\end{theorem}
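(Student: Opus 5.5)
The plan is to first secure the existence of a global minimizer, and then to show that any minimizer can be transformed, without increasing $\Phi$, into one that is nonnegative and ascending. Both transformations use the signed permutation invariance of $\kappa^{-1}$ noted before the theorem.

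\textbf{Step 1 (existence).} The functional is only defined on $\mathbb{R}^n\setminus\{0\}$, so existence is not automatic. I will use the elementary bounds $\|x\|_4^4\le\|x\|_2^4\le n\|x\|_4^4$. They give $1\le\kappa^{-1}(x)\le n$, hence $\Phi(x)\ge \tfrac12\|x-y\|_2^2+\mu$, and so $\Phi$ is coercive. Take a minimizing sequence $(x^k)$; it is bounded. Suppose $x^k\to 0$ along a subsequence. Then $\liminf\Phi(x^k)\ge \tfrac12\|y\|_2^2+\mu$. However, assuming $y\neq 0$, the competitor $x=y_{(n)}e_{(n)}$ satisfies $\kappa^{-1}(x)=1$ and $\Phi(x)=\tfrac12(\|y\|_2^2-y_{(n)}^2)+\mu<\tfrac12\|y\|_2^2+\mu$. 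So the minimizing sequence stays bounded away from $0$. It therefore has a limit point in $\mathbb{R}^n\setminus\{0\}$, where $\Phi$ is continuous, and this limit point is a global minimizer.

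The degenerate case $y=0$ is the main obstacle. There $\Phi(te_n)=t^2/2+\mu\downarrow\mu$, while $\Phi>\mu$ everywhere on the domain, so the infimum is not attained. I will therefore state explicitly that $y\neq0$ is assumed, or equivalently adopt the convention of extending $\kappa^{-1}$ by its lower semicontinuous envelope value $1$ at the origin. Under that convention $x^\star=0$ is a minimizer, and it is trivially ordered.

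\textbf{Step 2 (nonnegativity).} Given a minimizer $x$, set $\tilde x=|x|$ componentwise. The ratio is unchanged because both norms depend only on $|x_i|$. Since $y_i\ge0$, we have $(|x_i|-y_i)^2\le(x_i-y_i)^2$ term by term. Hence $\Phi(\tilde x)\le\Phi(x)$, and $\tilde x$ is again a global minimizer.

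\textbf{Step 3 (ordering).} Let $\hat x$ be the rearrangement of $\tilde x$ sorted in the same (ascending) order as $y$. Norms are permutation invariant, so $\kappa^{-1}(\hat x)=\kappa^{-1}(\tilde x)$ and $\|\hat x\|_2=\|\tilde x\|_2$. By the rearrangement inequality, $\langle\hat x,y\rangle\ge\langle\tilde x,y\rangle$. Expanding $\|\hat x-y\|_2^2=\|\hat x\|_2^2-2\langle \hat x,y\rangle+\|y\|_2^2$ then gives $\Phi(\hat x)\le\Phi(\tilde x)$. Therefore $x^\star:=\hat x$ is a global minimizer that is nonnegative and satisfies $0\le x^\star_{(1)}\le\cdots\le x^\star_{(n)}$, which proves the theorem.
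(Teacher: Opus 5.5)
Your proof is correct. Its ordering step is the same as the paper's: the paper removes inversions one pair at a time using $(x_i-y_{(i)})^2+(x_j-y_{(j)})^2-[(x_j-y_{(i)})^2+(x_i-y_{(j)})^2]=2(x_i-x_j)(y_{(j)}-y_{(i)})\ge 0$, which is just the elementary proof of the rearrangement inequality you invoke in Step 3.

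The difference is that this swap argument is the \emph{entire} proof in the paper, and it leaves two gaps that your other steps fill. The paper opens with ``Let $x$ be any global minimizer of $\Phi$'', so existence is assumed; your Step 1 supplies it, using $1\le\kappa^{-1}\le n$ for coercivity and the competitor $y_{(n)}e_{(n)}$ to keep minimizing sequences away from the origin. Sorting never changes signs, so the paper never actually proves the bound $0\le x^\star_{(1)}$; your Step 2, replacing $x$ by $|x|$ using $y\ge 0$, closes this.

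Your remark about $y=0$ is also right. With $\kappa^{-1}$ defined only on $\mathbb{R}^n\setminus\{0\}$, the infimum $\mu$ is not attained, so the theorem as literally stated needs either $y\neq 0$ or a value at the origin. The paper implicitly uses your lower semicontinuous convention $\kappa^{-1}(0)=1$: the ``Origin'' rows of its validation tables report $\Phi(0)=\frac12\|y\|_2^2+\mu$ (e.g.\ $7.10$ for $\mu=0.1$, $y=[1;2;3]$).
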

\begin{proof}
Define the regularizer $\kappa^{-1}(x)$ as in \eqref{kurt0}.
Since both $\|\cdot\|_2$ and $\|\cdot\|_4$ are invariant under permutations of
the components, $\kappa^{-1}(x)$ is permutation invariant, i.e.,
\[
\kappa^{-1}(\Pi x)=\kappa^{-1}(x)
\qquad \text{for any permutation matrix } \Pi.
\]

Let $x$ be any global minimizer of $\Phi$. Suppose $x$ is not sorted. Then
there exist indices $i<j$ such that $x_i>x_j$. Since $y$ is sorted, we also
have $y_{(i)}\le y_{(j)}$. Let $\tilde x$ be obtained by swapping $x_i$ and $x_j$.
Then $\kappa^{-1}(\tilde x)=\kappa^{-1}(x)$. Moreover, a direct expansion yields
\[
(x_i-y_{(i)})^2+(x_j-y_{(j)})^2
-
\Big[(x_j-y_{(i)})^2+(x_i-y_{(j)})^2\Big]
=
2(x_i-x_j)(y_{(j)}-y_{(i)})\ge 0,
\]
which implies $\|\tilde x-y\|_2^2\le \|x-y\|_2^2$. Therefore, $\Phi(\tilde x) \le \Phi(x)$.
Since $x$ is a minimizer, $\tilde x$ is also a minimizer. Repeating this
exchange argument finitely many times eliminates all inversions and produces
a minimizer $x^\star$ satisfying
\[
x_{(1)}^\star\le x_{(2)}^\star\le \cdots \le x_{(n)}^\star.
\]
\end{proof}

\subsection{First-Order Optimality Conditions and Auxiliary Parameters}
Applying the first-order necessary conditions for optimality to $\Phi(x)$ for $x \neq 0$ yields a system of nonlinear equations for each component $x_{(i)}$:
\begin{equation}
x_{(i)} - y_{(i)} + 4\mu \frac{\|x\|_2^2}{\|x\|_4^4}\left( 1-\frac{\|x\|_2^2 }{\|x\|_4^4} x_{(i)}^2\right) x_{(i)} = 0.
\end{equation}
To analyze this system, we define the auxiliary scalar parameter $\alpha$ based on the norm ratio of the solution:
\begin{equation} \label{alpha}
\alpha = \frac{\|x\|_2^2}{\|x\|_4^4}.
\end{equation}
In terms of this parameter, the optimality conditions reduce to a family of cubic equations 
\begin{equation}\label{xi_cube}
x_{(i)}^3 - px_{(i)} + q_{(i)} = 0,
\end{equation}
where the coefficients $p$ and $q_{(i)}$ are defined as:
\begin{equation}\label{pq_kurt}
p = \frac{1 + 4\mu\alpha}{4\mu\alpha^2}, \quad q_{(i)} = \frac{y_{(i)}}{4\mu\alpha^2}.
\end{equation}

\subsection{Geometric Interpretation}

The coordinate-wise cubic optimality conditions in \eqref{xi_cube} can be equivalently expressed as the intersection of two basic functions using the classical method of Omar Khayyam \cite{Vali_2021_OKG}:
\begin{equation}
x_{(i)}^2 - p = -\frac{q_{(i)}}{x_{(i)}}.
\end{equation}
This formulation allows us to interpret each solution component $x_{(i)}$ as the intersection between two scalar functions: a fixed convex parabola $P(x) = x^2 - p$ and a hyperbolic $H_i(x) = -q_{(i)}/x$. The parameter $q_{(i)}$, which is directly proportional to the input magnitude $y_{(i)}$, dictates the shape of the hyperbola for each coordinate.

As illustrated in Figure \ref{cubic_roots_kurt}, the intersection points of these two functions in the positive orthant define the potential candidates for the proximal solution. For a given index $i$, the cubic nature of the optimality equation yields three real roots, exactly two of which are positive: a small-root branch $r_i$ and a large-root branch $R_i$. 
The figure shows the dynamic behavior of these roots as the input values increase. As $y_{(i)}$ (and thus $q_{(i)}$) increases, the hyperbolic curve $H_i(x)$ shifts further from the axes, causing the small roots $r_i$ to move monotonically to the right and the large roots $R_i$ to move monotonically to the left. This visual movement confirms the monotonicity properties established in Theorem \ref{theorem2}:
\begin{equation}\label{rR}
r_1 < r_2 < \dots < r_n \quad \text{and} \quad R_1 > R_2 > \dots > R_n.
\end{equation}

A critical insight provided by Figure \ref{cubic_roots_kurt} is the strict separation between these two solution branches. Because the hyperbolic curves $H_i(x)$ are nested within each other while the parabola $P(x)$ remains fixed, the smallest element of the large-root branch $R_n$ is guaranteed to be strictly greater than every element of the small-root branch. 
This separation is vital for the branch selection rule established in Theorem \ref{theorem3}. It ensures that an ordered proximal solution must utilize the small-root branch for the first $n-1$ components to maintain monotonicity. Consequently, the decision-making process for the proximity operator is simplified to determining whether the leading component $x_{(n)}$ should remain on the small-root branch or switch to the large-root branch as the regularization parameter $\mu$ increases.

\begin{figure}[!h]
\center
\includegraphics[scale=1.5]{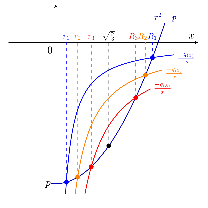}
\caption{Geometric interpretation of the cubic optimality condition. The parabolic function $P(x)=x^2-p$ and the hyperbolic functions $H_i(x)=-q_{(i)}/x$ are shown for three increasing values of $q_{(i)}$. For each $i$, their intersection points define the two real roots $r_i$ (small-root branch) and $R_i$ (large-root branch) of the equation 
$x_{(i)}^3-px+q_{(i)}=0$. As $q_{(i)}$ increases, the small roots move monotonically to the right, while the large roots move monotonically to the left.}
\label{cubic_roots_kurt}
\end{figure}

\begin{theorem}[Ordering of the two positive root branches] \label{theorem2}
Let $p>0$ be fixed and consider, for each $i=1,\dots,n$, the depressed cubic
$x^3 - px + q_{(i)} = 0$,
where $q_i>0$ and the discriminant $\frac{q(i)^2}{4} + \frac{p^3}{27}$ is negative so that the equation admits
three distinct real roots, exactly two of which are positive. Denote these
two positive roots by $0<r_i<R_i$.
Assume that $y\in\mathbb{R}^n$ is nonnegative and sorted,
$0<y_{(1)} \le y_{(2)} \le \cdots \le y_{(n)}$,
and that $q_{(i)}$ is defined in \eqref{pq_kurt}.
Then the vectors of positive roots satisfy the monotonicity properties \eqref{rR}.
Moreover, since $r_n < R_n$, it follows that
\[
R_n > \max_{1 \le i \le n} r_i.
\]
\end{theorem}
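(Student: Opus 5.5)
The plan is to view each positive root as a function of the single parameter $q$, with $p>0$ held fixed. Consider $f(x)=x^3-px$ on $(0,\infty)$. Then $f(0)=0$, $f$ decreases on $(0,\sqrt{p/3})$ to its minimum $f(\sqrt{p/3})=-\tfrac{2p}{3}\sqrt{p/3}$, and increases on $(\sqrt{p/3},\infty)$. The positive roots of $x^3-px+q=0$ are exactly the solutions of $f(x)=-q$.

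The negative-discriminant hypothesis $\tfrac{q^2}{4}<\tfrac{p^3}{27}$ means precisely $0<q<\tfrac{2p}{3}\sqrt{p/3}$. So $-q$ lies strictly between the minimum value of $f$ and $0$. By the intermediate value theorem and strict monotonicity on each piece, there is exactly one root $r(q)\in(0,\sqrt{p/3})$ and exactly one root $R(q)\in(\sqrt{p/3},\sqrt{p})$. This also recovers the claim that exactly two roots are positive, with $r_i<\sqrt{p/3}<R_i$.

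The main step is the monotonic dependence on $q$. On $(0,\sqrt{p/3})$, $f$ is strictly decreasing, so its inverse branch $f_-^{-1}$ is strictly decreasing. Then $r(q)=f_-^{-1}(-q)$ is strictly increasing in $q$. On $(\sqrt{p/3},\infty)$, $f$ is strictly increasing, so $R(q)=f_+^{-1}(-q)$ is strictly decreasing in $q$. One can confirm this with implicit differentiation: $r'(q)=-1/f'(r)>0$ because $f'(r)=3r^2-p<0$, and $R'(q)=-1/f'(R)<0$ because $f'(R)>0$.

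Since $q_{(i)}=y_{(i)}/(4\mu\alpha^2)$ with the same positive factor for every $i$, the ordering of the $y_{(i)}$ transfers directly to the $q_{(i)}$. This yields \eqref{rR}, with strict inequalities whenever $y_{(i)}<y_{(i+1)}$. Where $y_{(i)}=y_{(i+1)}$, the roots coincide, so the statement should be read with non-strict inequalities. I would remark on this rather than overclaim.

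Finally, the separation claim follows from $r_i<\sqrt{p/3}<R_n$ for every $i$, so $R_n>\max_i r_i$. Alternatively, it follows from monotonicity: $\max_i r_i=r_n<R_n$.

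The only delicate point I expect is the bookkeeping of the branches. One has to make sure the discriminant condition, which is stated per $i$, is used for every index, so that both branches exist for all $i$ simultaneously. Since $q_{(n)}$ is the largest value, it suffices that the condition hold at $i=n$.
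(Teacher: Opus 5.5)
Your argument is correct and follows essentially the paper's route: the sign of $3x^2-p$ on either side of $\sqrt{p/3}$ makes the small-root branch increasing and the large-root branch decreasing in $q$, and the ordering of the $q_{(i)}$ is inherited from that of the $y_{(i)}$. Your additions fill in or correct points that the paper's proof only asserts or overstates: you locate the roots explicitly in $(0,\sqrt{p/3})$ and $(\sqrt{p/3},\sqrt{p})$; you read the discriminant hypothesis as $q^2/4<p^3/27$, which is what is intended, since $q^2/4+p^3/27>0$ for $p>0$; and you note that the inequalities in \eqref{rR} are strict only when the $y_{(i)}$ are distinct.
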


\begin{proof}
Define $F(x,q)=x^3 - px + q$. Any root $x(q)$ satisfies $F(x(q),q)=0$.
By implicit differentiation,
\begin{equation}
\frac{dx}{dq}
=
-\frac{1}{3x^2 - p}.
\end{equation}
%
Since the discriminant is negative, the equation admits three distinct
real roots, two of which are positive. The smaller positive root $r(q)$
lies in $(0,\sqrt{p/3})$, hence $3r(q)^2 - p < 0$, and therefore $\frac{dr}{dq} > 0$.
Thus, $r(q)$ is strictly increasing as a function of $q$.
From the definition $
q_{(i)} = \frac{y_{(i)}}{4\mu\alpha^2}$,
and the ordering $y_{(1)} \le \cdots \le y_{(n)}$, it follows that $q_{(1)} \le q_{(2)} \le \cdots \le q_{(n)}$.
Since $r(q)$ is increasing in $q$, we conclude that
$r_1 < r_2 < \cdots < r_n$.

For the larger positive root $R(q)$, we have $R(q) > \sqrt{p/3}$, hence
$3R(q)^2 - p > 0$, which implies $\frac{dR}{dq} < 0$.
Thus, $R(q)$ is strictly decreasing as a function of $q$. Therefore,
since $q_{(i)}$ increases with $i$, it follows that $R_1 > R_2 > \cdots > R_n$.
Finally, since $r_i < R_i$ for all $i$ and $\{r_i\}$ is increasing, we have
\[
r_i \le r_n < R_n, \qquad i=1,\dots,n,
\]
which implies
\[
R_n > \max_{1 \le i \le n} r_i.
\]
\end{proof}

\begin{theorem}[Small-root selection for the first $n-1$ components] \label{theorem3}
Assume that $y\in\mathbb{R}^n$ is nonnegative and sorted,
$0<y_{(1)}\le y_{(2)}\le \cdots \le y_{(n)}$.
For each $i=1,\dots,n$, consider the cubic optimality equation
\begin{equation}
4\mu\alpha^2 x_i^3-(1+4\mu\alpha)x_i+y_{(i)}=0,
\end{equation}
and assume that it admits exactly two positive roots $0<r_i<R_i$.
Suppose further that the roots satisfy \eqref{rR} and that
$R_n>r_{n-1}$.
Then any ordered minimizer $x^\star$ of the proximal problem satisfies
\[
x_i^\star=r_i,
\qquad i=1,\dots,n-1.
\]
In other words, the first $n-1$ components must coincide with the smaller
positive roots.
\end{theorem}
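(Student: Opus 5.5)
Our plan is to combine the coordinatewise optimality conditions with the ordering constraint. Let $x^\star$ be an ordered minimizer, which exists by Theorem~\ref{theorem1}, and let $\alpha=\|x^\star\|_2^2/\|x^\star\|_4^4$ be computed from it. Every component then solves the same family of cubics with common $p$ and with $q_{(i)}$ proportional to $y_{(i)}$.

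\textbf{Step 1 (positivity).} First we show $x^\star_i>0$. Since $x^\star\neq 0$ (the functional is undefined at $0$), the first-order condition holds at $x^\star$. If $x^\star_i=0$, the cubic would give $y_{(i)}=0$, which contradicts $y_{(i)}>0$. Negative components are excluded because flipping their sign leaves $\kappa^{-1}$ unchanged and strictly decreases $\|x-y\|_2^2$ when $y_{(i)}>0$; this is the sign-pattern argument used before Theorem~\ref{theorem1}. Hence each $x^\star_i$ is a positive root of its cubic, so $x^\star_i\in\{r_i,R_i\}$.

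\textbf{Step 2 (ordering forces the small branch).} Next we use the monotonicity \eqref{rR} together with $R_n>\max_i r_i$ from Theorem~\ref{theorem2}. Suppose some $i\le n-1$ has $x^\star_i=R_i$. Ordering gives $x^\star_j\ge R_i$ for every $j>i$. For such $j$, the option $x^\star_j=r_j$ would need $r_j\ge R_i$. But $r_j\le r_n<R_n<R_i$, using $R_n<R_i$ from the strict decrease of $R$, so this is impossible. Thus $x^\star_j=R_j$, and then ordering requires $R_j\ge R_i$. This contradicts $R_j<R_i$ for $j>i$. Since $i\le n-1$, the index $j=n$ exists, so the contradiction always applies, and therefore $x^\star_i=r_i$ for all $i\le n-1$.

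\textbf{Main obstacle.} The delicate point is equal inputs, $y_{(i)}=y_{(j)}$. Then $q_{(i)}=q_{(j)}$, so $R_i=R_j$ and the strict inequality fails. In that case I would go back to the monotonicity argument. If $x^\star_i=x^\star_j=R_i$ for two indices, I would compare with the competitor obtained by moving one of them to the small root. The first attempt would be a second-order test: on the large branch $3x^2-p>0$, so I would try to show that the Hessian of $\Phi$, restricted to the antisymmetric direction $e_i-e_j$ (which keeps $\|x\|_2$ fixed to first order), is negative. That would mean two coordinates on the large branch cannot form a local minimum because kurtosis favors a single spike. If this computation turns out to be messy, I would instead read Theorem~\ref{theorem3} as assuming the strict ordering \eqref{rR} as a hypothesis. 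The statement does exactly that, which settles the tie case trivially.
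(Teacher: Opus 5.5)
Your proof is correct and follows the paper's argument: the paper also uses $x_n^\star\in\{r_n,R_n\}$ to get $x_n^\star\le R_n<R_i$, which contradicts $R_i=x_i^\star\le x_n^\star$; this is your case analysis with $j=n$. Your Step~1 adds a justification the paper omits, namely why each component lies in $\{r_i,R_i\}$, and the tie discussion is not needed because the statement assumes the strict ordering \eqref{rR}.
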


\begin{proof}
Let $x^\star$ be an ordered minimizer, so that
$0\le x_1^\star\le x_2^\star\le \cdots \le x_n^\star$. Since $x_i^\star$ must satisfy the cubic equation,
we have either $x_i^\star=r_i$ or $x_i^\star=R_i$.

Assume for contradiction that $x_i^\star=R_i$ for some $i\le n-1$.
Because the sequence $\{R_i\}$ is strictly decreasing,
$R_i\ge R_{n-1}>R_n$.
On the other hand, by ordering we must have $x_i^\star\le x_n^\star$.
The largest admissible value that $x_n^\star$ can take among the stationary
roots is $R_n$, hence $x_n^\star\le R_n$. Therefore,
$R_i=x_i^\star\le x_n^\star\le R_n$,
which contradicts $R_i>R_n$. Hence $x_i^\star\neq R_i$ for all $i\le n-1$.
Consequently,
\[
x_i^\star=r_i,
\qquad i=1,\dots,n-1,
\]
which proves the claim.
\end{proof}

\subsection{The Feasibility Region}

The existence of real-valued, positive solution candidates for the coordinate-wise cubic equations derived in \eqref{xi_cube} depends on the discriminant of the cubic form. For a nonzero proximal solution to exist, the parameters must satisfy a condition that ensures the cubic $x^3 - px + q_{(i)} = 0$ admits positive roots. Since $y_{(n)}$ is the largest component of the sorted input, the global feasibility condition is determined by the $n$-th coordinate:
\begin{equation}
\frac{y_{(n)}^2}{(4\mu\alpha^2)^2} \le \frac{4(1 + 4\mu\alpha)^3}{27(4\mu\alpha^2)^3}.
\end{equation}
By simplifying this expression and substituting the definitions of the auxiliary parameters, we obtain the fundamental feasibility inequality:
\begin{equation} \label{feas_ineq}
(1 + 4\mu\alpha)^3 \ge 27\mu\alpha^2 y_{(n)}^2.
\end{equation}

The boundary of the feasibility region is defined by the equality $(1 + 4\mu\alpha)^3 = 27\mu\alpha^2 y_{(n)}^2$ and equivalently the explicit parametric solution is:
\begin{equation} \label{feas_par}
\mu(u) = \frac{27u^2y_{(n)}^2}{(1+4u)^3}, \quad
\alpha(u) = \frac{(1+4u)^3}{27uy_{(n)}^2}, \quad u>0.
\end{equation}

Figure \ref{cubic_feas} illustrates the feasible and infeasible regions in the $(\mu, \alpha)$ plane, where the black curve indicates the boundary. As established by the boundary equation, the infeasible set is bounded from below by the threshold $\alpha = 1/y_{(n)}^2$ and from the right by the $\mu = y_{(n)}^2/4$.

\begin{figure}[!ht]
\center
\includegraphics[scale=1.5]{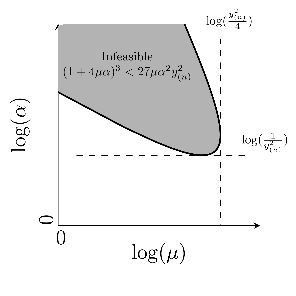}
\caption{The feasibility region for the $\ell_2^4/\ell_4^4$ ratio in the $(\mu, \alpha)$ plane. The red boundary curve separates the feasible and infeasible zones for the existence of real-valued solution branches. The infeasible region is bounded from below by the physical threshold $\alpha = 1/y_{(n)}^2$  and from the right by the $\mu = y_{(n)}^2/4$.}
\label{cubic_feas}
\end{figure}

\subsection{Trigonometric Form of the Solution Branches}
Utilizing Cardano's trigonometric form for depressed cubics \cite{Zucker_2008_TCE}, the cubic root branches are expressed as:
\begin{equation}\label{r_R_cubic}
r_i = 2\sqrt{\frac{p}{3}}\cos\Big(\frac{\theta_i+4\pi}{3}\Big), \quad R_i = 2\sqrt{\frac{p}{3}}\cos\Big(\frac{\theta_i}{3}\Big)
\end{equation}
where the auxiliary angle $\theta_i$ is dictated by the ratio of the coefficients:
\begin{equation}
\cos\theta_i = -\frac{3\sqrt{3} q_{(i)}}{2\sqrt{p^3}}.
\end{equation}
%
%
%

As established in Theorem \ref{theorem3}, the proximal solution for the $\ell_2^4/\ell_4^4$ ratio maintains a fixed structural pattern: the first $n-1$ coordinates reside on the small-root branch $r_i$, while the largest coordinate $x_{(n)}$ may switch from $r_n$ to the large-root branch $R_n$ as the regularization parameter $\mu$ increases. The transition between these regimes occurs at a unique critical value denoted as $\mu_c$.

\subsection{Determination of the Critical Parameter $\mu_c$}

At the critical point $\mu = \mu_c$, the two positive solution candidates for the largest component coincide, i.e., $r_n = R_n$. In terms of the cubic optimality condition, this corresponds to the vanishing of the coordinate-wise discriminant, leading to the following boundary condition for the internal norm ratio $\alpha$ and the input magnitude $y_{(n)}$:
\begin{equation} \label{Feas}
(1 + 4\mu\alpha)^3 = 27\mu\alpha^2 y_{(n)}^2.
\end{equation}
In the trigonometric representation, this vanishing discriminant is equivalent to the condition $\theta_n = \pi$. Under this limit, the auxiliary angles for all other coordinates are dictated by the relative magnitudes of the input vector:
\begin{equation}
\cos \theta_i = -\frac{y_{(i)}}{y_{(n)}}.
\end{equation}
At the transition point, the solution components $x_{(i)}$ are defined entirely by the small-root branch, which can be simplified using the critical angles. To facilitate a compact analytical expression for $\mu_c$, we introduce an auxiliary vector $v \in \mathbb{R}^n$ with components defined as:
\begin{equation}
v_i = \cos \left( \frac{\arccos(-y_{(i)}/y_{(n)}) + 4\pi}{3} \right).
\end{equation}
Using these components to define the solution $x_{(i)}:=r_{(i)}$ \eqref{r_R_cubic} and solving the two coupled systems \eqref{alpha} and \eqref{Feas} yields the explicit formula for the critical parameter:
\begin{equation} \label{mu_c_kurt}
\mu_c = \frac{y_{(n)}^2 \left( \sum_{i=1}^n v_i^4 \right)^2\left( 3 \sum_{i=1}^n v_i^2 - 4 \sum_{i=1}^n v_i^4 \right)}{\left( \sum_{i=1}^n v_i^2 \right)^3 }.
\end{equation}
This closed-form expression is a fundamental component of the proximity operator, providing the exact threshold for the selection rule that determines the global minimizer.

\subsection{Determination of the Internal Parameter $\alpha$}

For a given proximal parameter $\mu$, the internal norm ratio $\alpha = \|x\|_2^2 / \|x\|_4^4$ must satisfy a self-consistency condition where the $\alpha$ used to define the cubic coefficients $p$ and $q_{(i)}$ matches the norm ratio of the resulting solution vector $x$. We reduce this problem to a one-dimensional root-finding search for $\alpha$.

\subsubsection{Computational Procedure}
The following procedure ensures that the proximity operator is evaluated efficiently and that the solution remains within the feasibility region:

\begin{enumerate}
    \item Compute the critical parameter $\mu_c$ using the closed-form expression derived in \eqref{mu_c_kurt}.
    \item Based on the relationship between $\mu$ and $\mu_c$, determine the branch assignment for the largest coordinate $x_{(n)}$:
    \begin{itemize}
        \item If $\mu \le \mu_c$, assign $x_{(n)}$ to the small-root branch $r_n$.
        \item If $\mu > \mu_c$, assign $x_{(n)}$ to the large-root branch $R_n$.
    \end{itemize}
    All other components $x_{(i)}$ for $i = 1, \dots, n-1$ are assigned to the small-root branch $r_i$ .
    \item Solve for the root $\alpha^*$ of the residual function:
    \begin{equation}
    f(\alpha) = \alpha - \frac{\sum_{i=1}^n x_{(i)}(\alpha)^2}{\sum_{i=1}^n x_{(i)}(\alpha)^4} = 0,
    \end{equation}
    where $x_{(i)}(\alpha)$ are the trigonometric roots defined in \eqref{r_R_cubic}.
\end{enumerate}

The search space for $\alpha$ is governed by the global feasibility condition established: $(1 + 4\mu\alpha)^3 \ge 27\mu\alpha^2 y_{(n)}^2$. 
Depending on the magnitude of $\mu$ relative to the input magnitude $y_{(n)}$, we identify two distinct regimes for the root-finding process:
\begin{itemize}
    \item If $\mu > y_{(n)}^2/4$, the discriminant of the feasibility inequality remains positive for all $\alpha > 0$. In this case, no infeasibility occurs, and the root-finding problem is well-defined over the entire positive domain $(0, \infty)$.
    \item If $\mu \le y_{(n)}^2/4$, the boundary equation $(1 + 4\mu\alpha)^3 = 27\mu\alpha^2 y_{(n)}^2$ admits two real roots for $\alpha$, denoted as $\alpha_l$ (lower) and $\alpha_u$ (upper) (Figure \ref{cubic_feas}). These roots define two disjoint feasible intervals: $(0, \alpha_l]$ and $[\alpha_u, \infty)$.
To determine which interval contains the global minimizer, we examine the behavior of the proximity operator as the regularization vanishes. 
As $\mu \to 0$, the proximity operator approaches the identity operator, meaning $x \to y$. Consequently, the optimal norm ratio $\alpha^*$ must approach the norm ratio of the input vector:
    \begin{equation}
    \lim_{\mu \to 0} \alpha^*(\mu) = \alpha_y =\frac{\|y\|_2^2}{\|y\|_4^4}.
    \end{equation}
However, in this limit, as seen form \eqref{feas_par} $\alpha_l$ tends toward infinity. Thus, it is guaranteed that for sufficiently small $\mu$, the physical solution $\alpha$ falls within the lower interval $(0, \alpha_l]$. 
On the other hand, since the proximity operator $\text{prox}_{\mu \kappa^{-1}}(y)$ is a continuous mapping with respect to $\mu$ (outside of the point where it might jump to the origin), the optimal parameter $\alpha^*(\mu)$ must follow a continuous path starting from $\alpha_y$ at $\mu=0$. Because the two feasible intervals $(0, \alpha_l]$ and $[\alpha_u, \infty)$ are disjoint, we do not expect $\alpha^*$ to ``jump" into the upper interval without first crossing the infeasible region, which is mathematically impossible for a stationary point branch. Consequently, we restrict the numerical search for $\alpha^*$ to this sub-interval to ensure convergence to the global minimizer while maintaining computational efficiency. Once $\alpha^*$ is identified, the final solution vector $x^*$ is reconstructed using the chosen branches.
\end{itemize}

Figure \ref{L2L4_toy} illustrates the evolution of $\alpha$ as a function of $\mu$ for the input vector $y = [1,\,2,\,3]^T$. The infeasible region is highlighted in blue. The blue markers denote the corresponding values of $\alpha$, which approach the limit $\|y\|_2^2 / \|y\|_4^4 \approx 0.1429$ as $\mu \to 0$. As $\mu$ increases, $\alpha$ decreases and reaches the boundary of the infeasible region (shown by the red curve) at the critical threshold $\mu_c \approx 0.83$. Beyond this point, $\alpha$ continues to decrease, attaining its minimum value of approximately $0.0943$ at $\mu \approx 5.2811$. For larger values of $\mu$, $\alpha$ increases again and asymptotically approaches $1/3^2$.
We note that, in the limit $\mu \to \infty$, the solution becomes maximally sparse: only the largest component is retained, with $x_{(n)} = y_{(n)}$, while all other entries vanish.

\begin{figure}[!ht]
\center
\includegraphics[scale=0.45]{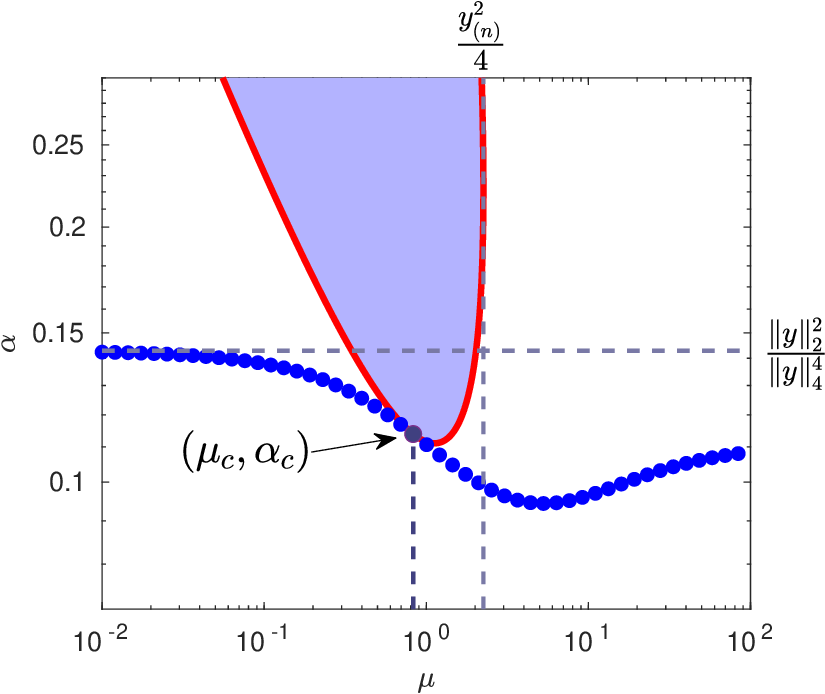}
\hspace{1cm}
\includegraphics[scale=0.45]{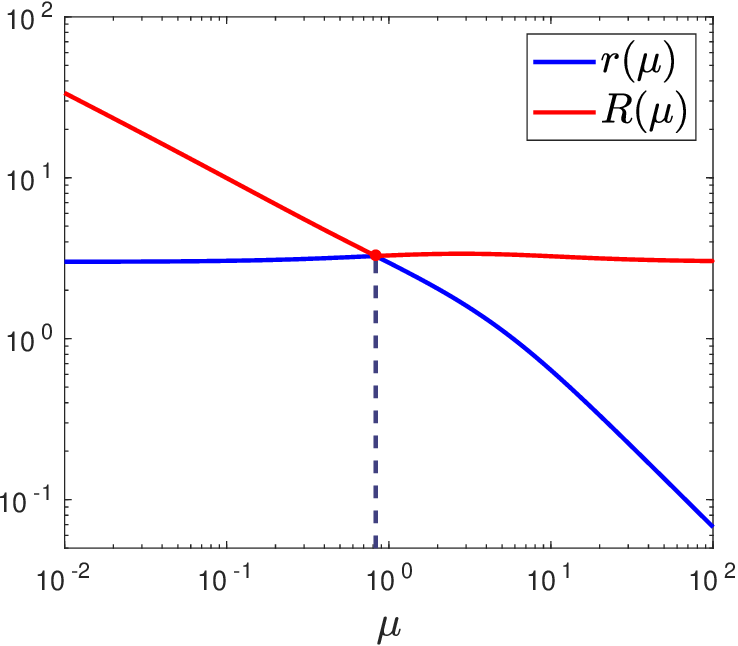}
\caption{Left: the evolution of $\alpha$ as a function of $\mu$ for the proximity operator of inverse kurtosis with input vector $y = [1,\,2,\,3]^T$.
Right: the small and large root branches corresponding to the larger sample 3.
}
\label{L2L4_toy}
\end{figure}

\section{Proximity Operator of the $\ell_2^3/\ell_3^3$ Ratio} \label{prox_skewness}
In this section, we investigate the computation of the proximity operator associated with the inverse skewness, defined as the cube of the ratio between the $\ell_2$ and $\ell_3$ norms:
\begin{equation}
\text{prox}_{\mu \kappa^{-1}}(y) = \arg\min_{x \in \mathbb{R}^n} \left\{ \Phi(x) := \frac{1}{2}\|x - y\|_2^2 + \mu \frac{\|x\|_2^3}{\|x\|_3^3} \right\}.
\end{equation}

Similar to the inverse kurtosis function, if the components of $y$ are sorted in non-decreasing order such that $0 \le y_{(1)} \le y_{(2)} \le \dots \le y_{(n)}$, there exists at least one global minimizer $x^*$ that preserves this monotonicity:
\begin{equation}
0 \le x^*_{(1)} \le x^*_{(2)} \le \dots \le x^*_{(n)}.
\end{equation}
Consequently, without loss of generality, we restrict our analysis to the nonnegative orthant and assume the input vector $y$ is pre-sorted in ascending order.

\subsection{First-Order Optimality Conditions}
Applying the first-order necessary conditions for optimality to the objective $\Phi(x)$ for $x \neq 0$ yields a system of nonlinear equations for each component $x_{(i)}$:
\begin{equation}
x_{(i)}^2 - \left( \frac{\|x\|_3^6}{3\mu\|x\|_2^3} + \frac{\|x\|_3^3}{\|x\|_2^2} \right) x_{(i)} + \frac{\|x\|_3^6}{3\mu\|x\|_2^3} y_{(i)} = 0.
\end{equation}
To analyze this system, we define the auxiliary scalar parameters $\alpha$ and $\beta$ as follows:
\begin{equation}\label{alpha_beta}
\alpha = \frac{\|x\|_2^2}{\|x\|_3^3}, \quad \beta = \|x\|_2.
\end{equation}
In terms of these parameters, the optimality conditions reduce to a family of quadratic equations $x_{(i)}^2 - px_{(i)} + q_{(i)} = 0$, where the coefficients $p$ and $q_{(i)}$ are defined as:
\begin{equation} \label{pq}
p = \frac{\beta + 3\mu\alpha}{3\mu\alpha^2}, \quad q_{(i)} = \frac{\beta y_{(i)}}{3\mu\alpha^2}.
\end{equation}
Each component of the proximal solution must therefore satisfy 
\begin{equation} \label{xi}
x_{(i)} = \frac{p}{2} \pm \frac{1}{2}\sqrt{p^2 - 4q_{(i)}}.
\end{equation}

\subsection{Geometric Interpretation}

The coordinate-wise optimality conditions derived in the previous section can be equivalently expressed as:
\begin{equation}
x_{(i)} - p = -\frac{q_{(i)}}{x_{(i)}},
\end{equation}
where $p$ and $q_{(i)}$ are defined as in \eqref{pq}. This formulation allows us to interpret the solution $x_{(i)}$ as the intersection between two scalar functions: a linear function $L(x) = x - p$ and a family of hyperbolic functions $H_i(x) = -q_{(i)}/x$. The two intersection points define two potential solution branches for each coordinate.

\subsection{Properties of the Solution Branches}
Assuming the feasibility condition (nonnegative discriminant) holds, the quadratic equation for each component yields two distinct positive roots, which we denote as the small-root branch $r_i$ and the large-root branch $R_i$:
\begin{equation} \label{roots}
r_i = \frac{p - \sqrt{p^2 - 4q_{(i)}}}{2}, \quad R_i = \frac{p + \sqrt{p^2 - 4q_{(i)}}}{2},
\end{equation}
with $0 < r_i \le R_i$. By leveraging the fact that $y$ (and consequently $q$) is sorted in non-decreasing order, we establish the following critical monotonicity and separation properties:

\begin{itemize}
    \item Monotonicity: The roots in the small-root branch are strictly increasing with respect to $q_{(i)}$, whereas the large-root branch is strictly decreasing:
    \begin{equation}
    r_1 < r_2 < \dots < r_n, \quad R_1 > R_2 > \dots > R_n.
    \end{equation}
    \item Branch Separation: Because $r_n \le R_n$, it follows that $R_n \geq \max_{1 \le i \le n} r_i$. This implies a strict separation where the smallest element of the large-root branch always exceeds every element of the small-root branch. This separation is vital for the selection rule discussed later, as the proximal solution typically utilizes the small-root branch for the first $n-1$ components and evaluates a switch to the large-root branch only for the final, largest component $x_{(n)}$.
\end{itemize}

Figure \ref{cubic_roots_Skuness} illustrates the graphs of $L(x)$ and $H_i(x)$ for three representative values $q_{(i)}$, highlighting how the intersection points determine the two admissible branches of solutions. 

\begin{figure}[!ht]
\center
\includegraphics[scale=1.4]{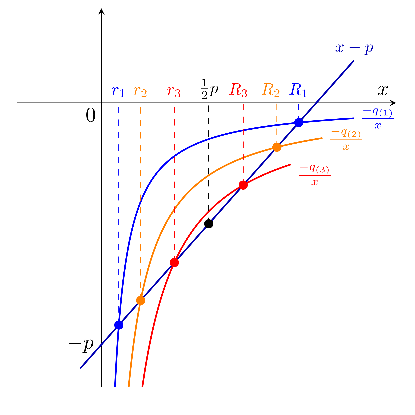}
\caption{Geometric interpretation of the quadratic optimality condition. The linear function $L(x)=x-p$ and the hyperbolic functions $H_i(x)=-q_{(i)}/x$ are shown for three increasing values of $q_{(i)}$. For each $i$, their intersection points define the two real roots $r_i$ (small-root branch) and $R_i$ (large-root branch) of the equation 
$x_{(i)}^2-px+q_{(i)}=0$. As $q_{(i)}$ increases, the small roots move monotonically to the right, while the large roots move monotonically to the left.}
\label{cubic_roots_Skuness}
\end{figure}


\subsection{The Feasibility Region}

The existence of a real-valued solution to the coordinate-wise quadratic equation \eqref{xi} depends on the nonnegativity of the discriminant. Since $y_{(n)}$ is the largest component of the sorted input vector, the global feasibility condition is determined by the $n$-th component:
\begin{equation}
\Delta = p^2 - 4q_{(n)} \ge 0.
\end{equation}
By substituting the definitions of the auxiliary parameters $p$ and $q_{(n)}$ from \eqref{pq}, we obtain a fundamental inequality that the norms of the optimal solution must satisfy:
\begin{equation}
\left( \frac{\beta + 3\mu\alpha}{3\mu\alpha^2} \right)^2 \ge \frac{4\beta y_{(n)}}{3\mu\alpha^2}.
\end{equation}
Multiplying by the positive quantity $(3\mu\alpha^2)^2$ and expanding the square leads to the following quadratic inequality in terms of the $\ell_2$-norm $\beta$:
\begin{equation}
\beta^2 + \left( 6\mu\alpha - 12\mu\alpha^2 y_{(n)} \right) \beta + 9\mu^2\alpha^2 \ge 0.
\end{equation}

\subsubsection{Existence Condition for $\alpha$}
To identify the boundaries of this region, we analyze the equality $\beta^2 + (6\mu\alpha - 12\mu\alpha^2 y_{(n)})\beta + 9\mu^2\alpha^2 = 0$. For real values of $\beta$ to exist at the boundary, the discriminant of this quadratic in $\beta$ must also be nonnegative:
\begin{equation}
\Delta_{\beta} = (6\mu\alpha - 12\mu\alpha^2 y_{(n)})^2 - 36\mu^2\alpha^2 \ge 0.
\end{equation}
Simplifying this expression yields:
\begin{equation}
144\mu^2\alpha^3 y_{(n)} \left( \alpha y_{(n)} - 1 \right) \ge 0.
\end{equation}
Given that $\mu > 0$ and the norms of a nonzero solution are positive, this condition implies a lower bound on the parameter $\alpha$:
\begin{equation}
\alpha \ge \frac{1}{y_{(n)}}.
\end{equation}

\subsubsection{Boundary Curves and Asymptotic Behavior}
The feasibility region is bounded by two curves in the $(\alpha, \beta)$ plane, representing the roots of the quadratic equation for $\beta$:
\begin{equation}
\beta_{\pm} = 
3\mu\alpha\left(\sqrt{\alpha y_{(n)}}\pm \sqrt{\alpha y_{(n)}-  1}\right)^2.
\end{equation}
These two branches, $\beta_{+}$ (upper) and $\beta_{-}$ (lower), define the valid range for the $\ell_2$-norm of the solution for any given ratio $\alpha$. 
\begin{itemize}
    \item Upper Branch: As $\alpha \to \infty$, the upper boundary $\beta_{+}$ grows quadratically with $\alpha$.
    \item Lower Branch: The lower boundary $\beta_{-}$ decreases monotonically and approaches a horizontal asymptote:
    \begin{equation}
    \lim_{\alpha \to \infty} \beta_{-} = \frac{3\mu}{y_{(n)}}.
    \end{equation}
\end{itemize}

The feasibility region represents the set of all pairs $(\alpha, \beta)$ for which the first-order optimality conditions can be satisfied by real-valued components. Any numerical procedure to solve the proximity operator must ensure that the iterative updates for $\alpha$ and $\beta$ remain within this set.

\subsection{Trigonometric Form of the Solution Branches}

Assuming the feasibility condition $p^2 - 4q_{(n)} \ge 0$ established in the previous section holds, the coordinate-wise quadratic equations admit real-valued solutions. To gain deeper insight into the complementary relationship between the solution branches and to facilitate numerical stability, we introduce a trigonometric reparameterization of the quadratic formula.

Starting from the standard algebraic solution to the coordinate-wise equation:
\begin{equation}
x_{(i)} = \frac{p}{2} \left( 1 \pm \sqrt{1 - \frac{4q_{(i)}}{p^2}} \right),
\end{equation}
the feasibility assumption ensures that the ratio $4q_{(i)}/p^2$ remains within the interval for all $i = 1, \dots, n$. This allows us to define an auxiliary angle $\theta_i \in [0, \pi/2]$ such that:
\begin{equation}
\sin^2 \theta_i = \frac{4q_{(i)}}{p^2}.
\end{equation}
Substituting this definition into the discriminant yields $\sqrt{1 - \sin^2 \theta_i} = \cos \theta_i$. Consequently, the potential solutions for each coordinate can be expressed simply as:
\begin{equation}
x_{(i)} = \frac{p}{2} (1 \pm \cos \theta_i).
\end{equation}

By applying the half-angle identities, $1 - \cos \theta = 2 \sin^2(\theta/2)$ and $1 + \cos \theta = 2 \cos^2(\theta/2)$, we obtain an explicit and structured representation for the root branches $r_i$ and $R_i$:
\begin{equation}\label{r_R}
r_i = p \sin^2 \left( \frac{\theta_i}{2} \right), \quad R_i = p \cos^2 \left( \frac{\theta_i}{2} \right).
\end{equation}
This shows that for any coordinate $i$, the sum of the two potential root candidates is always equal to the linear parameter $p$: $r_i + R_i = p$.

Similar to the inverse kurtosis case, for the inverse slewness also the proximal solution follows a distinct structural pattern: the first $n-1$ components correspond to the small-root branch $r(i)$, while the final, largest component $x_{(n)}$ may correspond to either the smaller root $r(n)$ or the larger root $R(n)$. There exists a critical value of the proximal parameter, denoted as $\mu_c$, which serves as the transition point where the global minimizer switches branches for the leading component.

To illustrate these properties, consider the input vector $y = [1;2;3]$. Figure \ref{cubic_roots_Skuness2} displays the linear function $L(x) = x - p(\mu)$ together with the hyperbolic functions $H_n(x) = -q_{(n)}(\mu)/x$ for increasing values of $\mu \in [1.6,\,8.1]$.  For each value of $\mu$, the intersection points of $L(x)$ and $H_n(x)$ define the two positive roots $r_n$ (circles) and $R_n$ (squares) of the quadratic equation $x_{(n)}^2-p(\mu)x_{(n)}+q_{(n)}(\mu)=0$. The selected solution $x_{(n)}$ is indicated by a white cross. At the critical value $\mu_c \approx 2.9$, the selected root switches from the smaller branch $r_n$ to the larger branch $R_n$.

\begin{figure}[!ht]
\center
\includegraphics[scale=1]{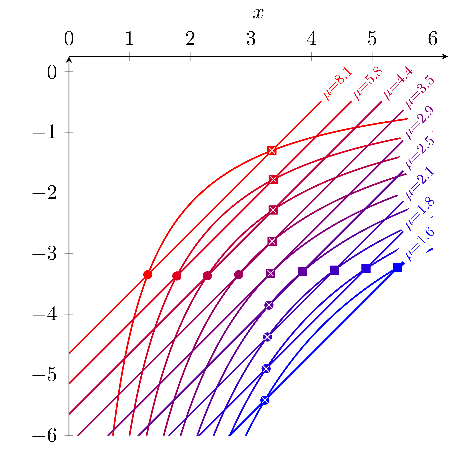}
\caption{Geometric interpretation of the optimality condition for $y=[1;2;3]$. The linear function $L(x)=x-p(\mu)$ and the hyperbolic functions $H_n(x)=-q_{(n)}(\mu)/x$ are shown for increasing values of $\mu$. For each $\mu$, their intersection points define the two real roots $r_n$ (circle) and $R_n$ (square) of the equation 
$x_{(n)}^2-p(\mu)x_{(n)}+q_{(n)}(\mu)=0$. The selected value $x_{(n)}$ is marked by white cross. For $\mu_c=2.9130$ the selection of $x_{(n)}$ switch from $r_n$ to $R_n$.}
\label{cubic_roots_Skuness2}
\end{figure}

\subsection{Determination of the Critical Parameter $\mu_c$}

At the critical point $\mu = \mu_c$, the two potential roots for the largest coordinate coincide, i.e., $r_n = R_n$. Mathematically, this transition is equivalent to the vanishing of the discriminant in the coordinate-wise quadratic equation:
$p^2 = 4q_{(n)}$. In terms of the trigonometric representation, this condition implies $\sin^2 \theta_n = 1$, or equivalently, $\theta_n = \pi/2$. We can establish that at this critical limit, the angles for all coordinates are governed by the relative magnitudes of the input components:
%
\begin{equation}
\sin^2 \theta_i = \frac{y_{(i)}}{y_{(n)}}.
\end{equation}
Consequently, at the transition point $\mu = \mu_c$, the solution components are defined entirely by the small-root branch:
\begin{equation}
x_{(i)} = r_i = p \sin^2 \left( \frac{1}{2} \arcsin \sqrt{\frac{y_{(i)}}{y_{(n)}}} \right).
\end{equation}

By substituting this specific form of $x_{(i)}$ into the definitions of the auxiliary parameters $\alpha$ and $\beta$ \eqref{alpha_beta}, we obtain a coupled system of three equations for the three unknowns $(\alpha, \beta, \mu_c)$. Solving this system provides an explicit, closed-form expression for the critical parameter.
To facilitate a compact representation, we introduce an auxiliary vector $v \in \mathbb{R}^n$ with components defined as:
\begin{equation}
v_i = \sin^2 \left( \frac{1}{2} \arcsin \sqrt{\frac{y_{(i)}}{y_{(n)}}} \right).
\end{equation}
The critical parameter $\mu_c$ is then uniquely determined by the norms of this vector and the maximum magnitude of the input:
\begin{equation}\label{mu_c}
\mu_c = \frac{16 \|v\|_3^6 \left( \|v\|_2^2 - \|v\|_3^3 \right)}{3 \|v\|_2^5} y_{(n)}^2.
\end{equation}

\subsection{Determination of $\alpha$ and $\beta$ for a Given $\mu$}

The computation of the proximity operator requires identifying the specific pair of auxiliary parameters $(\alpha, \beta)$ that satisfy the first-order optimality conditions for a given proximal parameter $\mu$. These parameters are implicitly defined by the norm properties of the solution vector $x$:
\begin{equation}\label{bivar}
\alpha = \frac{\sum x_{(i)}(\alpha,\beta)^2}{\sum x_{(i)}(\alpha,\beta)^3}, \quad \beta^2 = \sum x_{(i)}(\alpha,\beta)^2.
\end{equation}
Substituting the trigonometric representation of the solution components $x_{(i)}$ into these definitions yields a coupled system of nonlinear equations.
To simplify the bivariate system, we introduce a single auxiliary variable $\delta$, which captures the relationship between $\alpha, \beta,$ and $\mu$:
\begin{equation}
\delta = \frac{2\alpha \sqrt{3\mu\beta}}{\beta + 3\mu\alpha}.
\end{equation}
This transformation allows us to express the solution components $x_{(i)}$ as a function of $\delta$ as well as $\alpha$ and $\beta$:
\begin{equation}
x_{(i)}(\alpha,\beta,\delta) 
= \frac{2}{\delta}\sqrt{\frac{\beta}{3\mu\alpha^2}}\sin^2\Big(\frac{1}{2} \arcsin(\delta\sqrt{y_{(i)}})\Big).
\end{equation}
By incorporating $\delta$, we lift the original bivariate problem \eqref{bivar} into a three-dimensional system with three unknowns $(\alpha, \beta, \delta)$. Utilizing the trigonometric representation of the solution components $x_{(i)}$ and defining two auxiliary functions $\phi(\delta)$ and $\psi(\delta)$ that depend only on $\delta$ and the input vector $y$:
\begin{equation}
\phi(\delta) = \frac{\sum \sin^4 \left( \frac{1}{2} \arcsin(\delta \sqrt{y_{(i)}}) \right)}{\sum \sin^6 \left( \frac{1}{2} \arcsin(\delta \sqrt{y_{(i)}}) \right)}, \quad \psi(\delta) = \sum \sin^4 \left( \frac{1}{2} \arcsin(\delta \sqrt{y_{(i)}}) \right),
\end{equation}
the optimality conditions can be expressed as a system of three nonlinear equations:
\begin{equation}\label{ab}
\left\{
\begin{aligned}
\frac{\beta + 3\mu\alpha}{3\mu\alpha^2}
&= \frac{2}{\delta}\sqrt{\frac{\beta}{3\mu\alpha^2}}, \\[6pt]
\frac{2}{\delta}\sqrt{\frac{\beta}{3\mu}}
&= \phi, \\[6pt]
3\mu\alpha^2 \beta \delta^2
&= 4\psi.
\end{aligned}
\right.
\end{equation}

Remarkably, this lifted system admits a closed-form solution for the unknowns in terms of the auxiliary functions $\phi$ and $\psi$. Specifically, the equation for $\delta$ decouples entirely from $\alpha$ and $\beta$, resulting in a single scalar equation for $\delta$:
\begin{equation}\label{delta}
\delta^4 = \frac{16}{3\mu} \cdot \frac{\psi(\delta)^{1/2} (\phi(\delta) - 1)}{\phi(\delta)^3}.
\end{equation}
Solving this equation (e.g., via bisection or Newton's method) yields the optimal parameter $\delta^*$.
Once the optimal $\delta^*$ is identified, the values for the physical norm parameters $\alpha$ and $\beta$ follow immediately through the following closed-form relations derived from the lifted system:
\begin{equation}\label{red_ab}
\alpha = \frac{\sqrt{3 \phi(\delta^*)}\psi(\delta^*)^{1/4}}{3 \sqrt{\mu} \sqrt{\phi(\delta^*) - 1}},
 \quad \beta = \sqrt{3 \mu \phi(\delta^*) (\phi(\delta^*) - 1)} \psi(\delta^*)^{1/4}.
\end{equation}
This reduction effectively transforms the $n$-dimensional nonconvex proximal problem into a stable one-dimensional search, ensuring both computational efficiency and numerical robustness.

Figure \ref{L2L3_toy} illustrates the evolution of $\alpha$ and $\beta$ as functions of $\mu$ for the input vector $y = [1,\,2,\,3]^T$. The vertical surface shows the boundary of infeasible region. The blue markers denote the corresponding values of $\alpha(\mu),\, \beta(\mu)$.

\begin{figure}[!ht]
\center
\includegraphics[scale=0.35,trim={4cm 0 5cm 0},clip]{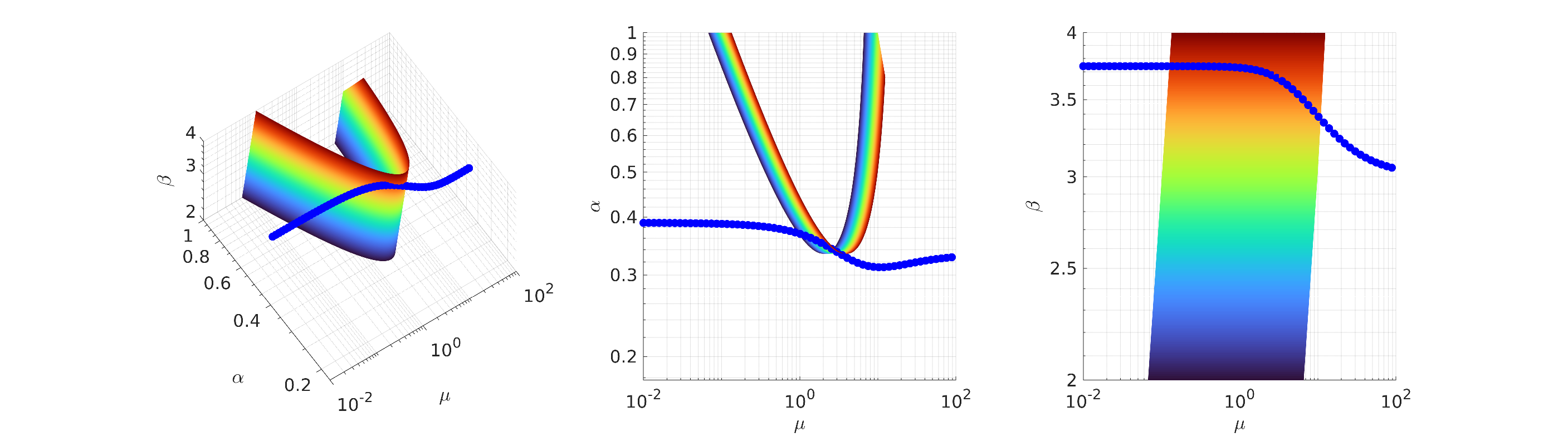}
\caption{The evolution of $\alpha$ and $\beta$ as functions of $\mu$ for the proximity operator of inverse skewness with input vector $y = [1,\,2,\,3]^T$.}
\label{L2L3_toy}
\end{figure}

\subsection{Computational Procedure}
The complete algorithm for computing the proximity operator of the $\ell_2^3/\ell_3^3$ ratio proceeds in $O(n)$ time:
\begin{enumerate}
    \item Sort the absolute values of the input vector $y$ in ascending order.
    \item Calculate the critical parameter $\mu_c$ using \eqref{mu_c}.
    \item Solve the scalar root-finding problem for $\delta^*$ \eqref{delta} to recover the norm parameters $\alpha$ and $\beta$ in \eqref{red_ab}.
    \item Construct $x^*$ by assigning components to the appropriate root branches according to $\mu_c$.
\end{enumerate}

\section{Numerical Examples}

\subsection{Comparison with Global Search}
To empirically validate the calculation of the proximity operators and derivation of the critical parameter $\mu_c$, we conduct a brute-force enumeration test using the input vector $y =[1;~2;~3]$. For this $n=3$ dimensionality, there exist exactly $2^3 = 8$ nonzero candidate vectors formed by every possible combination of the small-root ($r$) and large-root ($R$) branches, in addition to the origin. Tables \ref{tab:validation_skewness} and \ref{tab:validation_kurtosis} represent the result for the inverse skewness and inverse kurtosis proximity operators. For both cases, we evaluate the proximal objective $\Phi(x)$ across representative values of $\mu$ spanning the calculated critical threshold $\mu_c \approx 2.9130$ (skewness) and $\mu_c \approx 0.83$ (kurtosis).

\begin{itemize}
    \item \textbf{Subcritical Regime:} For $\mu < \mu_c$ the global minimizer is unambiguously the \textbf{rrr} pattern, where every component resides on the small-root branch. This confirms that under low regularization, the solution remains a stable deformation of the input signal.
    \item \textbf{Supercritical Regime:} For $\mu > \mu_c$ the global minimum shifts to the \textbf{rrR} pattern. This result perfectly matches the theoretical expectation that as regularization increases, only the largest component $x_{(n)}$ is permitted to switch to the large-root branch to minimize the scale-invariant norm ratio.
\end{itemize}

Crucially, the numerical results demonstrate that all other enumerated combinations (e.g., $Rrr, rRr, RRR$) yield significantly higher objective values. Also, it provides empirical proof for the branch separation property where $R_n > \max_i r_i$ and 
the perfect alignment between global solution obtained via gris search and the proposed analytical framework choosing the root corresponding to the global minimum.

We further validated the proximity operators by plotting the output sample values as a function of the regularization parameter $\mu$. Figure \ref{test_toy_curves} displays these curves for both inverse skewness and inverse kurtosis using the input $y = [1; 2; 3]$. Results from brute-force enumeration (solid lines) and the proposed proximal algorithm (dashed lines) are perfectly superimposed, confirming that our analytical selection rule consistently identifies the global minimizer across all regimes. 
 The plots reveal that the proximal mapping is continuous for all $\mu > 0$. As $\mu$ increases beyond the critical threshold $\mu_c$ (the vertical dashed line), the largest output sample $x_{(3)}$ initially increases as it transitions to the large-root branch before descending to match the input value $y_{(3)}$ in the limit. Conversely, the smaller samples $x_{(1)}$ and $x_{(2)}$ approach zero as $\mu \to \infty$, demonstrating that the solution converges to a maximally sparse (single-spike) state, which is the hallmark of non-Gaussianity maximization.
\begin{table}[!h]
\centering
\caption{Combined Global Minimizer Validation for $\ell_2^3/\ell_3^3$ ($y = [1; 2; 3]$)}
\label{tab:validation_skewness}
\begin{tabular}{cccccc}
\toprule
$\mu$ & \textbf{Type} & \textbf{Pattern} & \textbf{Vector $x^*$} & \textbf{$\Phi(x)$} & \textbf{Global?} \\ \midrule
0.10 & \textbf{Proposed (S)} & \textbf{rrr} & \textbf{[0.98, 1.99, 3.02]} & \textbf{0.15} & \textbf{YES} \\
 & Proposed (L) & rrR & [0.98, 1.99, 82.95] & 3195.94 & No \\
 & Enumerated & rRr & [0.98, 83.98, 3.02] & 3360.28 & No \\
 & Enumerated & rRR & [0.98, 83.98, 82.95] & 6556.16 & No \\
 & Enumerated & Rrr & [84.98, 1.99, 3.02] & 3526.62 & No \\
 & Enumerated & RrR & [84.98, 1.99, 82.95] & 6722.50 & No \\
 & Enumerated & RRr & [84.98, 83.98, 3.02] & 6886.84 & No \\
 & Enumerated & RRR & [84.98, 83.98, 82.95] & 10082.71 & No \\
 & Origin & 0 & [0.00, 0.00, 0.00] & 7.10 & No \\
\midrule
2.91 & \textbf{Proposed (S)} & \textbf{rrr} & \textbf{[0.61, 1.41, 3.32]} & \textbf{3.90} & \textbf{YES} \\
 & Proposed (L) & rrR & [0.61, 1.41, 3.33] & 3.90 & No \\
 & Enumerated & rRr & [0.61, 5.24, 3.32] & 9.29 & No \\
 & Enumerated & rRR & [0.61, 5.24, 3.33] & 9.29 & No \\
 & Enumerated & Rrr & [6.04, 1.41, 3.32] & 16.83 & No \\
 & Enumerated & RrR & [6.04, 1.41, 3.33] & 16.83 & No \\
 & Enumerated & RRr & [6.04, 5.24, 3.32] & 22.72 & No \\
 & Enumerated & RRR & [6.04, 5.24, 3.33] & 22.73 & No \\
 & Origin & 0 & [0.00, 0.00, 0.00] & 9.91 & No \\
\midrule
2.92 & Proposed (S) & rrr & [0.61, 1.40, 3.32] & 3.91 & No \\
 & \textbf{Proposed (L)} & \textbf{rrR} & \textbf{[0.61, 1.40, 3.32]} & \textbf{3.91} & \textbf{YES} \\
 & Enumerated & rRr & [0.61, 5.24, 3.32] & 9.28 & No \\
 & Enumerated & rRR & [0.61, 5.24, 3.32] & 9.28 & No \\
 & Enumerated & Rrr & [6.03, 1.40, 3.32] & 16.80 & No \\
 & Enumerated & RrR & [6.03, 1.40, 3.32] & 16.80 & No \\
 & Enumerated & RRr & [6.03, 5.24, 3.32] & 22.67 & No \\
 & Enumerated & RRR & [6.03, 5.24, 3.32] & 22.67 & No \\
 & Origin & 0 & [0.00, 0.00, 0.00] & 9.92 & No \\
\midrule
5.00 & Proposed (S) & rrr & [0.47, 1.06, 2.05] & 7.61 & No \\
 & \textbf{Proposed (L)} & \textbf{rrR} & \textbf{[0.47, 1.06, 3.37]} & \textbf{6.37} & \textbf{YES} \\
 & Enumerated & rRr & [0.47, 4.37, 2.05] & 9.58 & No \\
 & Enumerated & rRR & [0.47, 4.37, 3.37] & 9.98 & No \\
 & Enumerated & Rrr & [4.96, 1.06, 2.05] & 14.94 & No \\
 & Enumerated & RrR & [4.96, 1.06, 3.37] & 15.34 & No \\
 & Enumerated & RRr & [4.96, 4.37, 2.05] & 18.83 & No \\
 & Enumerated & RRR & [4.96, 4.37, 3.37] & 19.08 & No \\
 & Origin & 0 & [0.00, 0.00, 0.00] & 12.00 & No \\
\bottomrule
\end{tabular}
\end{table}
\begin{table}[!h]
\centering
\caption{Combined Global Minimizer Validation for $\ell_2^4/\ell_4^4$ ($y = [1; 2; 3]$)}
\label{tab:validation_kurtosis}
\begin{tabular}{cccccc}
\toprule
$\mu$ & \textbf{Type} & \textbf{Pattern} & \textbf{Vector $x^*$} & \textbf{$\Phi(x)$} & \textbf{Global?} \\ \midrule
0.10 & \textbf{Proposed (S)} & \textbf{rrr} & \textbf{[0.95, 1.95, 3.05]} & \textbf{0.20} & \textbf{YES} \\
 & Proposed (L) & rrR & [0.95, 1.95, 9.95] & 24.29 & No \\
 & Enumerated & rRr & [0.95, 10.68, 3.05] & 37.80 & No \\
 & Enumerated & rRR & [0.95, 10.68, 9.95] & 62.06 & No \\
 & Enumerated & Rrr & [11.27, 1.95, 3.05] & 52.87 & No \\
 & Enumerated & RrR & [11.27, 1.95, 9.95] & 77.13 & No \\
 & Enumerated & RRr & [11.27, 10.68, 3.05] & 90.64 & No \\
 & Enumerated & RRR & [11.27, 10.68, 9.95] & 114.90 & No \\
 & Origin & 0 & [0.00, 0.00, 0.00] & 7.10 & No \\
\midrule
0.82 & \textbf{Proposed (S)} & \textbf{rrr} & \textbf{[0.74, 1.58, 3.26]} & \textbf{1.44} & \textbf{YES} \\
 & Proposed (L) & rrR & [0.74, 1.58, 3.29] & 1.44 & No \\
 & Enumerated & rRr & [0.74, 4.72, 3.26] & 5.26 & No \\
 & Enumerated & rRR & [0.74, 4.72, 3.29] & 5.27 & No \\
 & Enumerated & Rrr & [5.26, 1.58, 3.26] & 10.76 & No \\
 & Enumerated & RrR & [5.26, 1.58, 3.29] & 10.77 & No \\
 & Enumerated & RRr & [5.26, 4.72, 3.26] & 15.00 & No \\
 & Enumerated & RRR & [5.26, 4.72, 3.29] & 15.02 & No \\
 & Origin & 0 & [0.00, 0.00, 0.00] & 7.82 & No \\
\midrule
0.84 & Proposed (S) & rrr & [0.74, 1.57, 3.25] & 1.47 & No \\
 & \textbf{Proposed (L)} & \textbf{rrR} & \textbf{[0.74, 1.57, 3.27]} & \textbf{1.47} & \textbf{YES} \\
 & Enumerated & rRr & [0.74, 4.69, 3.25] & 5.22 & No \\
 & Enumerated & rRR & [0.74, 4.69, 3.27] & 5.24 & No \\
 & Enumerated & Rrr & [5.24, 1.57, 3.25] & 10.67 & No \\
 & Enumerated & RrR & [5.24, 1.57, 3.27] & 10.68 & No \\
 & Enumerated & RRr & [5.24, 4.69, 3.25] & 14.86 & No \\
 & Enumerated & RRR & [5.24, 4.69, 3.27] & 14.87 & No \\
 & Origin & 0 & [0.00, 0.00, 0.00] & 7.84 & No \\
\midrule
2.50 & Proposed (S) & rrr & [0.51, 1.07, 1.80] & 5.83 & No \\
 & \textbf{Proposed (L)} & \textbf{rrR} & \textbf{[0.51, 1.07, 3.37]} & \textbf{3.74} & \textbf{YES} \\
 & Enumerated & rRr & [0.51, 3.91, 1.80] & 6.28 & No \\
 & Enumerated & rRR & [0.51, 3.91, 3.37] & 7.00 & No \\
 & Enumerated & Rrr & [4.27, 1.07, 1.80] & 10.20 & No \\
 & Enumerated & RrR & [4.27, 1.07, 3.37] & 10.94 & No \\
 & Enumerated & RRr & [4.27, 3.91, 1.80] & 13.74 & No \\
 & Enumerated & RRR & [4.27, 3.91, 3.37] & 14.47 & No \\
 & Origin & 0 & [0.00, 0.00, 0.00] & 9.50 & No \\
\bottomrule
\end{tabular}
\end{table}

\begin{figure}[!ht]
\center
\includegraphics[scale=0.55]{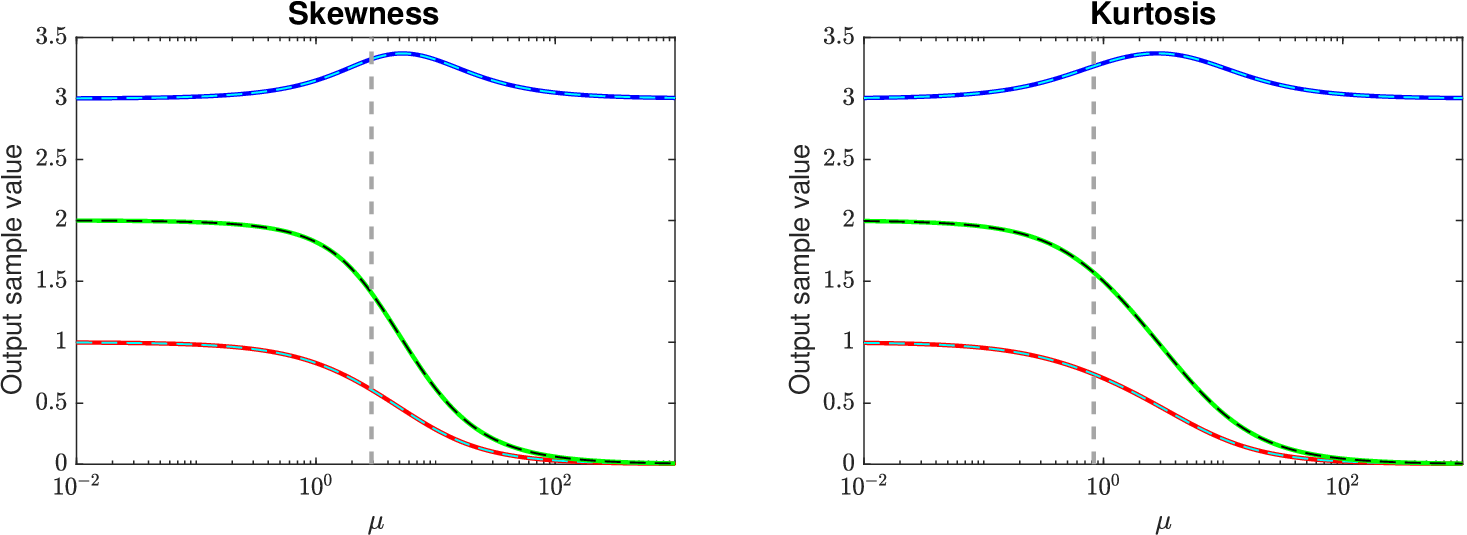}
\caption{Numerical validation and sample variation of the proximity operators for the input vector $y = [1; 2; 3]$. Left: Results for inverse skewness ($\ell_2^3/\ell_3^3$). Right: Results for inverse kurtosis ($\ell_2^4/\ell_4^4$). Solid curves represent the global minimum identified through brute-force enumeration of all $2^3=8$ non-zero stationary points, while dashed curves denote the solution obtained via the proposed analytical proximal algorithm. The perfect superposition of the curves validates the derived selection rule and the critical threshold parameter $\mu_c$ (vertical dashed line).}
\label{test_toy_curves}
\end{figure}


\subsection{Wavelet Phase Correction}
In this section, we evaluate the performance of our method for the phase correction of a Ricker wavelet. A symmetric (zero-phase) Ricker wavelet with a dominant frequency of 3 Hz was sampled with $dt=1$ ms. Constant phase shifts of $\pm 60^\circ$ were applied, as shown in the top row of Figure \ref{Ricker_cphase}. We then applied the proposed ADMM-based phase correction \eqref{ADMM_phase} to estimate the phase, employing first-order Tikhonov regularization to stabilize the estimates. The estimated phases for both inverse skewness and inverse kurtosis, shown in the second row of the figure, are highly accurate and correspond to the negative of the phase shifts added to the original signal. The final rotated wavelets (shown in red in the third row) are consistent with the original zero-phase wavelets (shown in blue). Finally, the skewness and kurtosis curves in the bottom row confirm a consistent increase in these statistical values at each iteration, demonstrating the stable convergence of the optimization. 

To further evaluate the framework's performance in a nonstationary environment, we cascaded two Ricker wavelets with distinct initial phase shifts, as illustrated in Figure \ref{Ricker_twins}. This test is designed to verify the algorithm's ability to track phase variations that change over time between different seismic events. Again, both the skewness and kurtosis methods successfully estimated the local phase functions and corrected the distortions.

\begin{figure}[!ht]
\center
\includegraphics[scale=0.45]{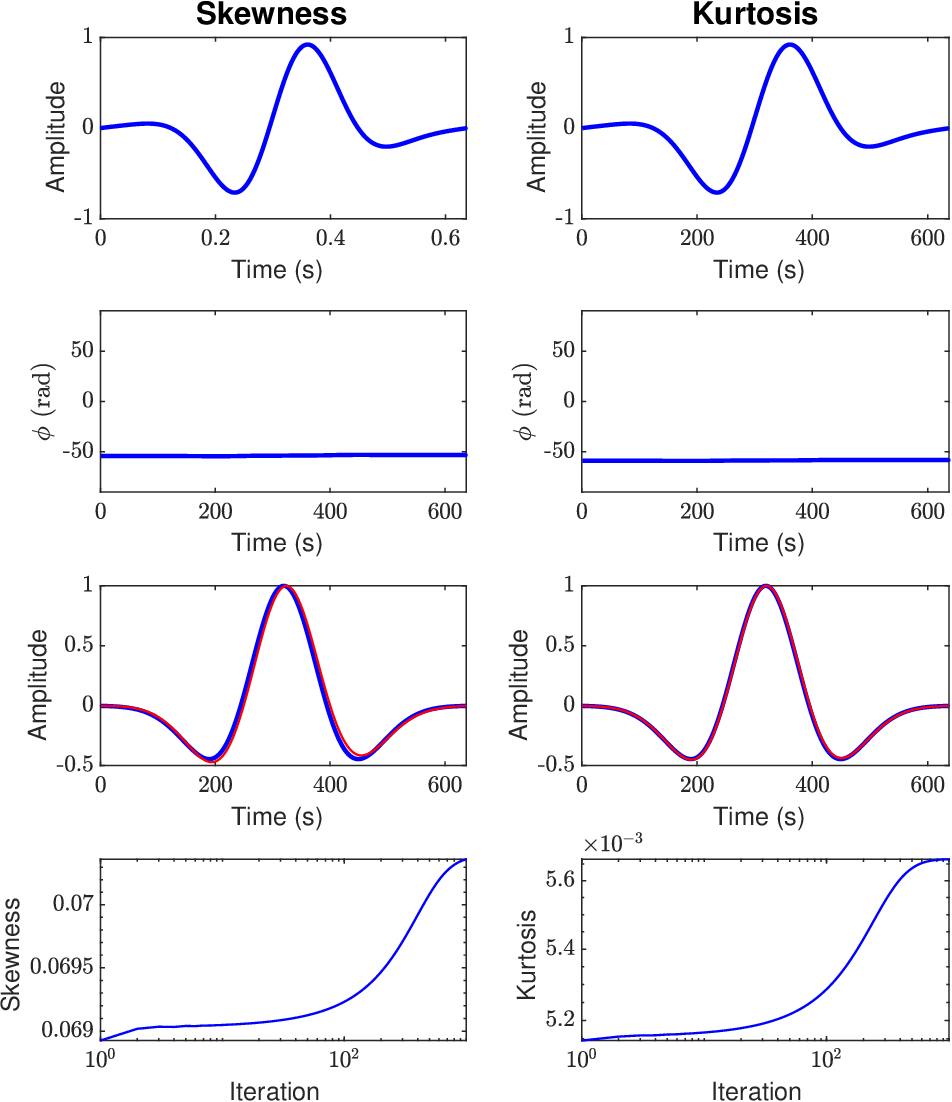}
\includegraphics[scale=0.45]{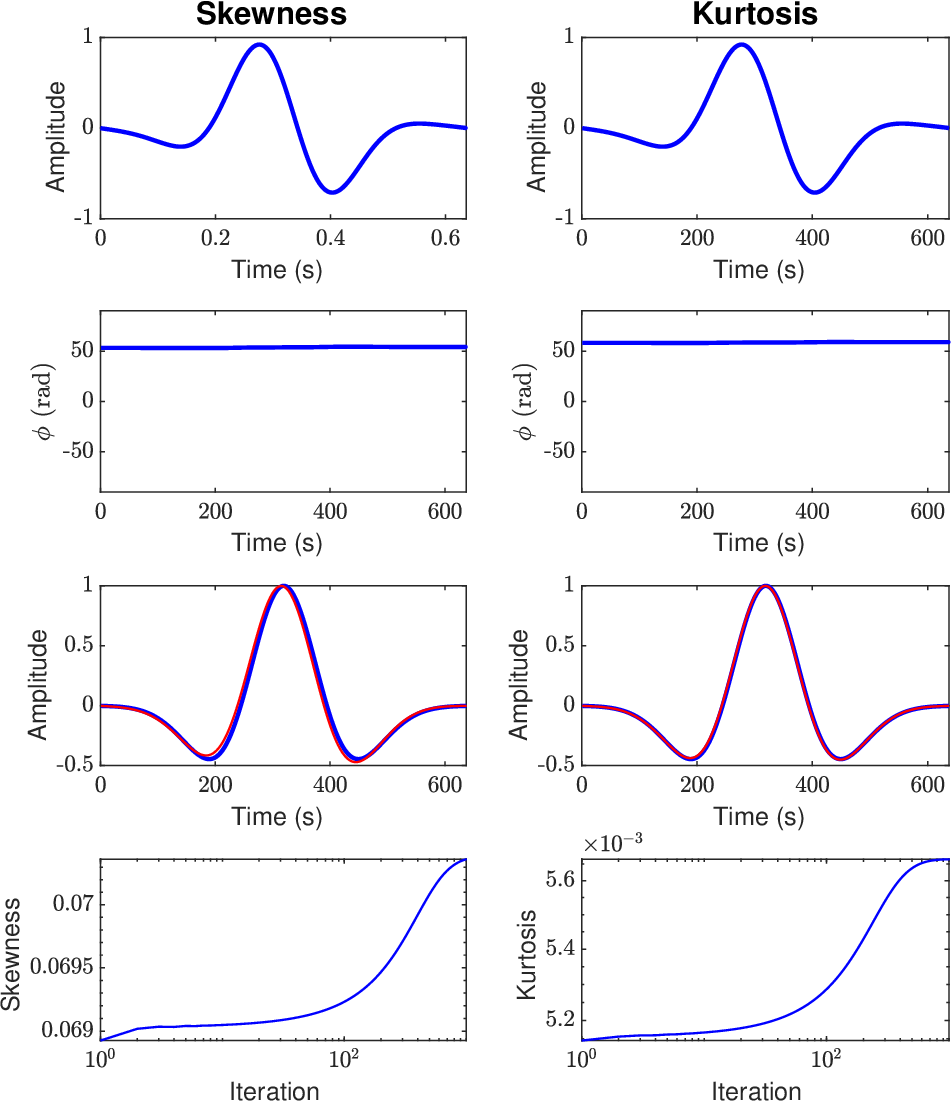}

\caption{Nonstationary phase correction of a 3 Hz Ricker wavelet using the proposed ADMM-based framework. (Top row) Input wavelets corrupted with constant phase shifts of $+60^\circ$ (left) and $-60^\circ$ (right). (Second row) Estimated phase functions obtained via skewness and kurtosis. (Third row) Comparison of the final rotated wavelets (red) against the original zero-phase wavelets (blue). (Bottom row) Convergence curves of the skewness and kurtosis values across ADMM iterations.}
\label{Ricker_cphase}
\end{figure}

\begin{figure}[!ht]
\center
\includegraphics[scale=0.45]{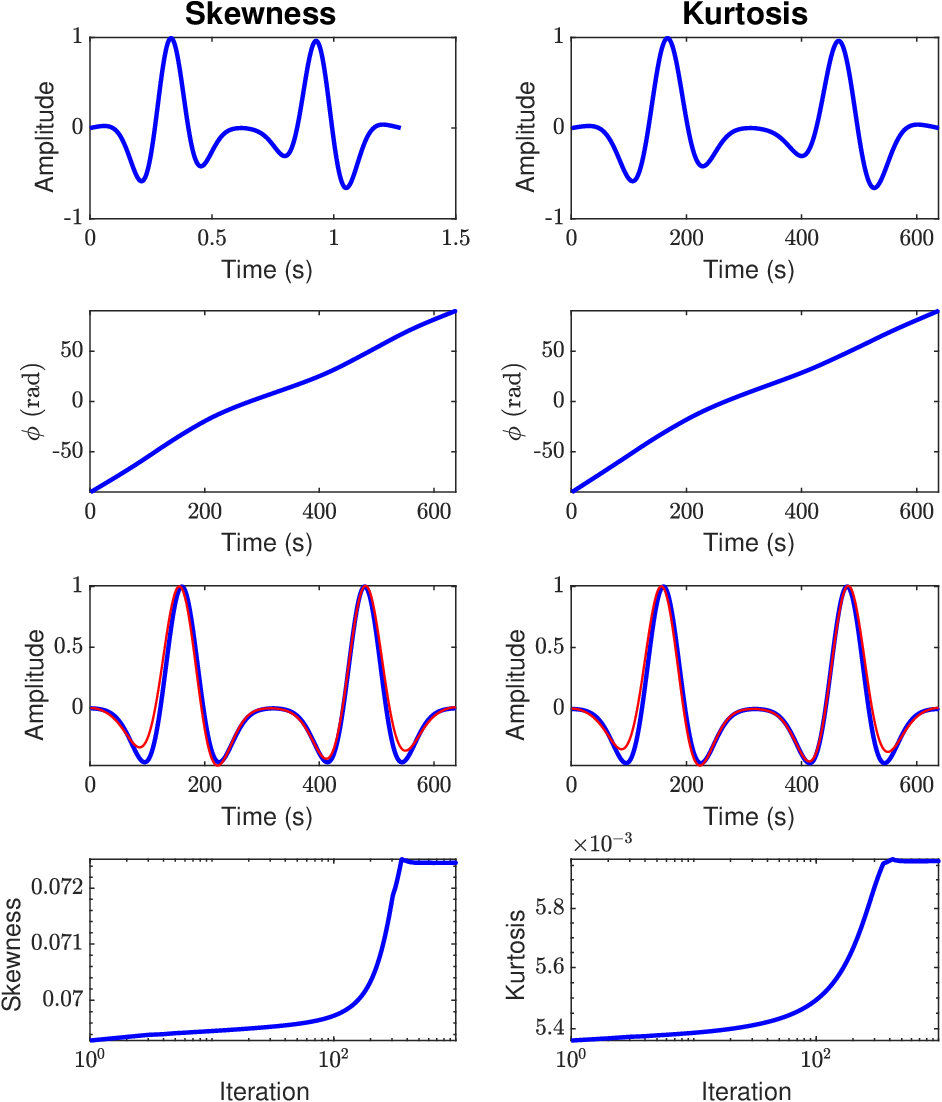}
\caption{ Nonstationary phase correction of Ricker wavelets with different phase shifts. (Top row) Input trace containing two wavelets. (Second row) Estimated nonstationary phase using skewness and kurtosis methods. (Third row) Final corrected trace (red) compared to the original zero-phase model (blue). (Bottom row)  Convergence curves of the skewness and kurtosis values across ADMM iterations.}
\label{Ricker_twins}
\end{figure}

\section{Conclusions}

In this paper we introduced a robust ADMM-based framework for nonstationary seismic phase estimation by recasting higher-order statistical maximization as a regularized proximal optimization task. By deriving the first closed-form proximity operators for scale-invariant inverse kurtosis and inverse skewness, we reduce complex high-dimensional non-convex minimizations to efficient $O(n)$ 1D root-finding subproblems. Numerical experiments using synthetic seismic data confirm that the proposed method accurately recovers nonstationary phase fields and can provide phase corrected signals. This method may be used as an alternative tool to obtain improved results in post-stack seismic data interpretation.

%

\section*{Acknowledgments}
We would like to acknowledge the assistance of volunteers in putting
together this example manuscript and supplement.

\bibliographystyle{siamplain}
\newcommand{\SortNoop}[1]{}

\end{document}

%% file: ex_shared.tex
\usepackage{lipsum}
\usepackage{amsfonts}
\usepackage{graphicx}
\usepackage{epstopdf}
\usepackage{algorithmic}
\ifpdf
  \DeclareGraphicsExtensions{.eps,.pdf,.png,.jpg}
\else
  \DeclareGraphicsExtensions{.eps}
\fi

\newsiamremark{remark}{Remark}
\newsiamremark{hypothesis}{Hypothesis}
\crefname{hypothesis}{Hypothesis}{Hypotheses}
\newsiamthm{claim}{Claim}
\newsiamremark{fact}{Fact}
\crefname{fact}{Fact}{Facts}

\headers{Proximal Maximization of Skewness and Kurtosis}{A. Gholami}

\title{Regularized Nonstationary Phase Estimation via Proximal Maximization of Skewness and Kurtosis
}

\author{Ali Gholami\thanks{Institute of Geophysics, Polish Academy of Sciences, Warsaw, Poland 
  (\email{agholami@igf.edu.pl}).}
}

\usepackage{amsopn}
